\documentclass[dvips,generic,twoside,noinfoline]{imsart}

\usepackage[OT1]{fontenc}
\usepackage{graphicx}
\usepackage{amssymb}
\usepackage{amsthm}
\usepackage{amsmath}
\usepackage{subfig}
\usepackage{natbib}

\newcommand{\bigoh}{\mathcal{O}}
\newcommand{\im}{\mathop{\rm Im}}
\newcommand{\Prob}{\mathop{\rm Prob}}
\newtheorem{observe}{Observation}

\newtheorem{theorem}{Theorem}[section]
\newtheorem{lemma}[theorem]{Lemma}

\newtheorem{remark1}[theorem]{Remark}

\newenvironment{remark}{\begin{remark1} \rm}{\end{remark1}}

\startlocaldefs
\def\phi{\varphi}
\def\epsilon{\varepsilon}
\endlocaldefs

\begin{document}

\begin{frontmatter}

\title{Computing the asymptotic power of a Euclidean-distance test
       for goodness-of-fit\thanksref{tt}}
\runtitle{Computing the asymptotic power of a Euclidean-distance test
          for goodness-of-fit}

\author{\fnms{William} \snm{Perkins}\ead[label=e1]{perkins@math.gatech.edu}}
\address{School of Mathematics\\Georgia Institute of Technology\\
         686 Cherry St.\\Atlanta, GA 30332-0160\\\printead{e1}}

\author{\fnms{Gary} \snm{Simon}\ead[label=e4]{gsimon@stern.nyu.edu}}
\address{IOMS Department\\Stern School of Business\\NYU\\44 West 4th St.\\
         New York, NY 10012\\\printead{e4}}

\author{and\\}

\author{\fnms{Mark} \snm{Tygert}\ead[label=e2]{tygert@aya.yale.edu}}
\address{Courant Institute of Mathematical Sciences\\NYU\\251 Mercer St.\\
         New York, NY 10012\\\printead{e2}}


\thankstext{tt}{Supported in part
by NSF Grant OISE-0730136, an NSF Postdoctoral Research Fellowship,
a DARPA Young Faculty Award, and an Alfred P. Sloan Research Fellowship.}

\runauthor{W. Perkins, G. Simon, and M. Tygert}

\begin{abstract}
A natural (yet unconventional) test for goodness-of-fit measures
the discrepancy between the model and empirical distributions
via their Euclidean distance (or, equivalently, via its square).
The present paper characterizes the statistical power of such a test
against a family of alternative distributions,
in the limit that the number of observations is large,
with every alternative departing from the model in the same direction.
Specifically, the paper provides an efficient numerical method for evaluating
the cumulative distribution function (cdf) of the square
of the Euclidean distance between the model and empirical distributions
under the alternatives, in the limit that the number of observations is large.
The paper illustrates the scheme by plotting the asymptotic power
(as a function of the significance level) for several examples.
\end{abstract}

\begin{keyword}[class=AMS]
\kwd[Primary ]{62G10}
\kwd{62F03}
\kwd[; secondary ]{65C60}
\end{keyword}

\begin{keyword}
\kwd{rms}
\kwd{root-mean-square}
\kwd{significance}
\kwd{statistic}
\end{keyword}

\tableofcontents

\end{frontmatter}

\section{Introduction}

Given $n$ observations, each falling in one of $m$ bins,
we would like to test if these observations are consistent with having arisen
as independent and identically distributed (i.i.d.)\ draws
from a specified probability distribution $p_0$
over the $m$ bins ($p_0$ is known as the ``model'').
A natural measure of the deviation between $p_0$ and the observations is
the square $x_a$ of the Euclidean distance
between the actually observed distribution of the draws
and the expected distribution $p_0$, that is,
\begin{equation}
x_a = \sum_{k=1}^m ((y_a)_k - (p_0)_k)^2,
\end{equation}
where $(y_a)_1$,~$(y_a)_2$, \dots, $(y_a)_m$ are the proportions
of the $n$ observations falling in bins $1$,~$2$, \dots, $m$, respectively.

The ``P-value'' is then defined to be the probability that $X_0 \ge x_a$,
where $X_0$ would be the same as $x_a$, but constructed from $n$ draws
that definitely are taken i.i.d.\ from $p_0$, that is,
\begin{equation}
X_0 = \sum_{k=1}^m ((Y_0)_k - (p_0)_k)^2,
\end{equation}
where $(Y_0)_1$,~$(Y_0)_2$, \dots, $(Y_0)_m$ are the proportions
of $n$ i.i.d.\ draws from $p_0$ falling in bins $1$,~$2$, \dots, $m$,
respectively. When calculating the P-value
--- the probability that $X_0 \ge x_a$ --- we view $X_0$ as a random variable
while viewing $x_a$ as a fixed number.
If the P-value is small, then we can be confident that the observed draws
were not taken i.i.d.\ from the model $p_0$.

To characterize the statistical power of the P-value based
on the Euclidean distance,
we consider $n$ i.i.d.\ draws from the alternative distribution
\begin{equation}
\label{initalt}
p_a = p_0 + a/\sqrt{n},
\end{equation}
where $a$ is a vector whose $m$ entries satisfy $\sum_{k=1}^m a_k = 0$.
We thus need to calculate the distribution of the square $X_a$
of the Euclidean distance,
\begin{equation}
\label{alt}
X_a = \sum_{k=1}^m ((Y_a)_k - (p_0)_k)^2,
\end{equation}
where $(Y_a)_1$,~$(Y_a)_2$, \dots, $(Y_a)_m$ are the proportions
of $n$ i.i.d.\ draws from $p_a$ falling in bins $1$,~$2$, \dots, $m$,
respectively.
Section~\ref{method} below provides an efficient method
for calculating the cumulative distribution function (cdf) of $n \cdot X_a$
in the limit that the number $n$ of draws is large.
Section~\ref{scheme} below then describes how to use such a method
to plot the cdf of the P-values;
this cdf is the same as the statistical power function
of the hypothesis test based on the Euclidean distance
(as a function of the significance level).
Presenting this method is the principal purpose of the present paper,
complementing the earlier discussions of~\cite{perkins-tygert-ward}
and~\cite{perkins-tygert-ward3},
which compare the Euclidean distance with classical statistics
such as $\chi^2$, the log--likelihood-ratio $G^2$, and other members
of the Cressie-Read power-divergence family;
\cite{perkins-tygert-ward} and~\cite{perkins-tygert-ward3}
review the classical statistics and provide detailed comparisons.

As reviewed, for example, by~\cite{kendall-stuart-ord-arnold} and~\cite{rao},
$m \cdot n \cdot X_a$ defined in~(\ref{alt}) converges in distribution
to a noncentral $\chi^2$ in the limit that the number $n$ of draws is large,
when the model $p_0$ is a uniform distribution.
When $p_0$ is nonuniform, $m \cdot n \cdot X_a$ converges in distribution
to the sum of the squares of independent Gaussian random variables
in the limit that the number $n$ of draws is large,
as shown by~\cite{moore-spruill} and reviewed in Section~\ref{linalg} below.
Section~\ref{representation} provides integral representations
for the cdf of the sum of the squares of independent Gaussian random variables
and applies suitable quadratures for their numerical evaluation.
Section~\ref{method} summarizes the numerical method obtained
by combining Sections~\ref{linalg} and~\ref{representation}.
Section~\ref{scheme} summarizes a scheme
for plotting the asymptotic power (as a function of the significance level)
using the method of Section~\ref{method}.
Section~\ref{examples} illustrates the methods via several numerical examples.

The extension to models with nuisance parameters is straightforward,
following~\cite{perkins-tygert-ward2};
the present paper focuses on the simpler case in which the model $p_0$
is a single, fully specified probability distribution.

\section{Preliminaries}
\label{linalg}

This section states Theorem~\ref{ms}, which is a special case
of Theorem~4.2 of~\cite{moore-spruill}. Before stating the theorem,
we need to set up some notation. The set-up amounts to an algorithm
for computing the real numbers $\sigma_1$,~$\sigma_2$, \dots, $\sigma_{m-1}$
and $\zeta_1$,~$\zeta_2$, \dots, $\zeta_{m-1}$ used in Theorem~\ref{ms},
where $m$ is an integer greater than 1.

First, we aim to define the positive real numbers
$\sigma_1$,~$\sigma_2$, \dots, $\sigma_{m-1}$,
given any $m \times 1$ vector $p_0$ whose entries are all positive.
We define $D$ to be the diagonal $m \times m$ matrix
\begin{equation}
D_{j,k} = \left\{ \begin{array}{rl}
                  \frac{1}{(p_0)_j}, & j = k \phantom{\Bigm|}\\
                  0, & j \ne k
                  \end{array} \right.
\end{equation}
for $j,k = 1$,~$2$, \dots, $m$.
We define $H$ to be the $m \times m$ matrix
\begin{equation}
H_{j,k} = \left\{ \begin{array}{rl}
                  1 - \frac{1}{m}, & j = k \phantom{\Bigm|}\\
                  -\frac{1}{m}, & j \ne k
                  \end{array} \right.
\end{equation}
for $j,k = 1$,~$2$, \dots, $m$.
Note that $H$ is an orthogonal projector.
We define $B = HDH$, so that $B$ is the self-adjoint $m \times m$ matrix
\begin{equation}
\label{diagonalizer}
B_{j,k} = \left\{ \begin{array}{rl}
          \frac{1}{(p_0)_j}
          - \frac{1}{m} \Bigl( \frac{1}{(p_0)_j}+\frac{1}{(p_0)_k} \Bigr)
          + \frac{1}{m^2} \sum_{l=1}^m \frac{1}{(p_0)_l}, & j = k
                                                            \phantom{\biggm|}\\
          - \frac{1}{m} \Bigl( \frac{1}{(p_0)_j}+\frac{1}{(p_0)_k} \Bigr)
          + \frac{1}{m^2} \sum_{l=1}^m \frac{1}{(p_0)_l}, & j \ne k
          \end{array} \right.
\end{equation}
for $j,k = 1$,~$2$, \dots, $m$.
As a self-adjoint matrix whose rank is $m-1$
(after all, $B = HDH$, $H$ is an orthogonal projector whose rank is $m-1$,
and $D$ is a full-rank diagonal matrix), $B$ given in~(\ref{diagonalizer})
has an eigendecomposition
\begin{equation}
\label{eigendecomposition}
B = Q \Lambda Q^{\top},
\end{equation}
where $Q$ is a real unitary $m \times m$ matrix
and $\Lambda$ is a diagonal $m \times m$ matrix
such that $\Lambda_{m,m} = 0$.
Finally, we define the positive real numbers
$\sigma_1$,~$\sigma_2$, \dots, $\sigma_{m-1}$ via the formula
\begin{equation}
\label{stddev}
\sigma_k^2 = 1/\Lambda_{k,k}
\end{equation}
for $k = 1$,~$2$, \dots, $m-1$,
where $\Lambda_{1,1}$,~$\Lambda_{2,2}$, \dots, $\Lambda_{m,m}$
are the diagonal entries of $\Lambda$
from the eigendecomposition~(\ref{eigendecomposition}).

Next, we define the real numbers $\zeta_1$,~$\zeta_2$, \dots, $\zeta_{m-1}$,
given both $p_0$ and an $m \times 1$ vector $a$
such that $\sum_{k=1}^m a_k = 0$.
We define the $(m-1) \times 1$ vector
\begin{equation}
\label{almostthere}
\eta = \tilde{Q}^{\top} a,
\end{equation}
where $\tilde{Q}$ is the leftmost $m \times (m-1)$ block of $Q$
from the eigendecomposition~(\ref{eigendecomposition}),
that is, $\tilde{Q}$ is the same as $Q$ after deleting the last column of $Q$.
We can then define the real numbers $\zeta_1$,~$\zeta_2$, \dots, $\zeta_{m-1}$
via the formula
\begin{equation}
\label{offset}
\zeta_k = \eta_k/\sigma_k
\end{equation}
for $k = 1$,~$2$, \dots, $m-1$,
where $\eta$ is defined in~(\ref{almostthere})
and $\sigma$ is defined in~(\ref{stddev}).

With this notation, we can state the following special case
of Theorem~4.2 of~\cite{moore-spruill}.

\begin{theorem}
\label{ms}
Suppose that $m$ is an integer greater than one,
$p_0$ is a probability distribution over $m$ bins
(that is, $p_0$ is an $m \times 1$ vector whose entries are all positive
and $\sum_{k=1}^m (p_0)_k = 1$),
$a$ is an $m \times 1$ vector such that $\sum_{k=1}^m a_k = 0$,
and $(Y_n)_1$,~$(Y_n)_2$, \dots, $(Y_n)_m$ are the proportions of draws falling
in bins $1$,~$2$, \dots, $m$, respectively, out of a total of $n$ i.i.d.\ draws
from the probability distribution
\begin{equation}
p_a = p_0 + a/\sqrt{n}.
\end{equation}
Suppose further that $X_n$ is the random variable
\begin{equation}
X_n = n \sum_{k=1}^m ((Y_n)_k - (p_0)_k)^2.
\end{equation}

Then, $X_n$ converges in distribution to the random variable
\begin{equation}
\label{limiting}
X_{\infty} = \sum_{k=1}^{m-1} \sigma_k^2 \, (Z_k + \zeta_k)^2
\end{equation}
as $n$ becomes large,
where $Z_1$,~$Z_2$, \dots, $Z_{m-1}$ are i.i.d.\ Gaussian random variables
of zero mean and unit variance,
$\sigma_1$,~$\sigma_2$, \dots, $\sigma_{m-1}$ are the positive real numbers
defined in~(\ref{stddev}),
and $\zeta_1$,~$\zeta_2$, \dots, $\zeta_{m-1}$ are the real numbers
defined in~(\ref{offset}).
The values of $\sigma_1$,~$\sigma_2$, \dots, $\sigma_{m-1}$ do not depend
on the vector $a$; the values of $\zeta_1$,~$\zeta_2$, \dots, $\zeta_{m-1}$
do depend on $a$.
\end{theorem}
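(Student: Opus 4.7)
The plan is to reduce Theorem~\ref{ms} to the multivariate central limit theorem together with a linear-algebraic identification of the limiting quadratic form. First I would apply the Lindeberg--Feller triangular-array CLT to the rescaled centered empirical proportions: each coordinate $\sqrt{n}\bigl((Y_n)_k - (p_a)_k\bigr)$ is the scaled sum of $n$ i.i.d.\ bounded mean-zero summands whose covariance matrix $\mathop{\rm diag}(p_a) - p_a p_a^{\top}$ converges to $\Sigma := \mathop{\rm diag}(p_0) - p_0 p_0^{\top}$, since $p_a \to p_0$. Hence $\sqrt{n}(Y_n - p_a) \Rightarrow W$ with $W \sim N(0, \Sigma)$. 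Because $p_a - p_0 = a/\sqrt{n}$, the shifted vector $\sqrt{n}(Y_n - p_0)$ converges in distribution to $W + a$, and the continuous mapping theorem delivers
\[
X_n \;=\; \bigl\| \sqrt{n}(Y_n - p_0) \bigr\|^2 \;\Rightarrow\; \|W+a\|^2.
\]

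It would remain to show that $\|W+a\|^2$ has the same distribution as $X_{\infty}$ in~(\ref{limiting}), which I would do by working in the eigenbasis of $B$. Since $H$ kills the all-ones vector, $(1,\dots,1)^{\top}/\sqrt{m}$ spans the kernel of $B = HDH$; taking it as the last column of $Q$ yields $\tilde{Q}\tilde{Q}^{\top} = I - \frac{1}{m}(1,\dots,1)^{\top}(1,\dots,1) = H$. Both $a$ and $W$ lie in the range of $H$ ($a$ by hypothesis, $W$ almost surely because the proportions sum to one), so $a = Ha = \tilde{Q}\tilde{Q}^{\top} a = \tilde{Q}\,\eta$ with $\eta$ as in~(\ref{almostthere}), and $W = \tilde{Q} V$ with $V := \tilde{Q}^{\top} W$. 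Consequently $\|W + a\|^2 = \|V + \eta\|^2 = \sum_{k=1}^{m-1}(V_k + \eta_k)^2$.

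The main obstacle is then to identify the covariance of $V$ as $\mathop{\rm diag}(\sigma_1^2,\dots,\sigma_{m-1}^2)$, equivalently to show that $B$ and $\Sigma$ are Moore--Penrose inverses of one another on the range of $H$. The short verification starts from $D p_0 = (1,\dots,1)^{\top}$, which gives $D\Sigma = I - (1,\dots,1)^{\top} p_0^{\top}$; left-multiplication by $H$ then annihilates the rank-one correction, so $HD\Sigma = H$. Because $\Sigma v$ lies in the range of $H$ whenever $v$ does (its coordinates sum to zero by a one-line calculation), this implies $B\Sigma v = HDHv = HDv = v$ for every $v$ in the range of $H$, forcing $\Sigma$ and $B$ to share eigenvectors with reciprocal nonzero eigenvalues. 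Hence $\tilde{Q}^{\top} \Sigma \tilde{Q} = \mathop{\rm diag}(\sigma_1^2, \dots, \sigma_{m-1}^2)$, so $V_k = \sigma_k Z_k$ with $Z_1,\dots,Z_{m-1}$ i.i.d.\ standard Gaussian. Substituting $\eta_k = \sigma_k \zeta_k$ from~(\ref{offset}) finally turns $\|V+\eta\|^2$ into $\sum_{k=1}^{m-1} \sigma_k^2 (Z_k + \zeta_k)^2 = X_{\infty}$, completing the proof. The final clause of the theorem, that $\sigma_k$ does not depend on $a$ whereas $\zeta_k$ does, is immediate from the constructions~(\ref{stddev}) and~(\ref{offset}), since only the latter uses $a$.
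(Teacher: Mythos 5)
Your argument is correct, and it is worth noting that the paper itself does not prove Theorem~\ref{ms} at all: it simply states the result as a special case of Theorem~4.2 of Moore and Spruill and relies on that citation. You have instead supplied a self-contained direct proof, and each step checks out. The triangular-array CLT applies because the summands $e_{J_i}-p_a$ are uniformly bounded and their covariance $\mathrm{diag}(p_a)-p_ap_a^{\top}$ converges to $\Sigma=\mathrm{diag}(p_0)-p_0p_0^{\top}$; the shift by $a$ is deterministic, so $\sqrt{n}(Y_n-p_0)\Rightarrow W+a$ and the continuous mapping theorem gives $X_n\Rightarrow\|W+a\|^2$. The linear algebra is also sound: the kernel of $B=HDH$ is exactly the span of the all-ones vector (since $v^{\top}Bv=(Hv)^{\top}D(Hv)$ with $D$ positive definite), so the last column of $Q$ spans it and $\tilde{Q}\tilde{Q}^{\top}=H$; both $a$ and $W$ lie in the range of $H$ (the latter because $\Sigma$ annihilates the all-ones vector); and the identity $HD\Sigma=H$ together with the invariance of the range of $H$ under $\Sigma$ shows that $B$ and $\Sigma$ are mutually inverse on that subspace, whence $\tilde{Q}^{\top}\Sigma\tilde{Q}=\mathrm{diag}(\sigma_1^2,\dots,\sigma_{m-1}^2)$ with $\sigma_k^2=1/\Lambda_{k,k}$ as in~(\ref{stddev}). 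What your approach buys is transparency and self-containment; what the paper's citation buys is generality, since Moore and Spruill's theorem covers a broader class of quadratic-form statistics (including the cases with estimated nuisance parameters alluded to at the end of the paper's introduction), of which the present statement is only a special case. The only cosmetic gap is that you could state explicitly that the $\Lambda_{k,k}$ for $k\le m-1$ are strictly positive (which follows from the positive-definiteness of $D$ on the range of $H$), so that division by $\sigma_k$ in~(\ref{offset}) and the claim that the $\sigma_k$ are positive are justified.
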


\begin{remark}
\label{faster}
The $m \times m$ matrix $B$ defined in~(\ref{diagonalizer}) is the sum
of a diagonal matrix and a low-rank matrix.
The methods of~\cite{gu-eisenstat94,gu-eisenstat95}
for computing the eigenvalues of such a matrix~$B$ 
and computing the result of applying $Q^{\top}$ from~(\ref{eigendecomposition})
to an arbitrary vector require only
either $\bigoh(m^2)$ or $\bigoh(m \log(m))$ floating-point operations.
The $\bigoh(m^2)$ methods of~\cite{gu-eisenstat94,gu-eisenstat95} are usually
more efficient than the $\bigoh(m \log(m))$ method of~\cite{gu-eisenstat95},
unless $m$ is impractically large.
\end{remark}

\section{Integral representations}
\label{representation}

This section describes efficient algorithms for evaluating
the cdf of the sum~(\ref{limiting}) of the squares
of independent Gaussian random variables.
The bibliography of~\cite{duchesne-micheaux} gives references
to possible alternatives to the methods of the present section.
Our principal tool is the following theorem,
representing the cdf as an integral suitable for evaluation via quadratures
(see, for example, Remark~\ref{quadratures} below);
the theorem expresses formula~7 of~\cite{rice}
in the same form as formula~8 of~\cite{perkins-tygert-ward}.

\begin{theorem}
Suppose that $\ell$ is a positive integer,
$Z_1$,~$Z_2$, \dots, $Z_{\ell}$ are i.i.d.\ Gaussian random variables
of zero mean and unit variance,
and $\sigma_1$,~$\sigma_2$, \dots, $\sigma_{\ell}$
and $\zeta_1$,~$\zeta_2$, \dots, $\zeta_{\ell}$
are real numbers.
Suppose in addition that $X$ is the random variable
\begin{equation}
\label{summed}
X = \sum_{k=1}^{\ell} \sigma_k^2 \, (Z_k+\zeta_k)^2.
\end{equation}

Then, the cdf $F$ of $X$ is
\begin{equation}
\label{contoured}
F(x) = \int_0^\infty \im\left(
       \frac{e^{1-y} \, e^{iy\sqrt{\ell}} \,
        \prod_{k=1}^{\ell} e^{\zeta_k^2 (1-w_k(y))/(2w_k(y))}}
       {\pi \, \bigl( y - \frac{1}{1-i\sqrt{\ell}} \bigr)
        \prod_{k=1}^{\ell} \sqrt{w_k(y)}}
       \right) \, dy
\end{equation}
for any positive real number $x$,
where
\begin{equation}
\label{specfun}
w_k(y) = 1-2(y-1)\sigma_k^2/x+2iy\sigma_k^2\sqrt{\ell}/x,
\end{equation}
and $F(x) = 0$ for any nonpositive real number $x$.
The square roots in~(\ref{contoured}) denote the principal branch,
and $\,\im$ takes the imaginary part.
\end{theorem}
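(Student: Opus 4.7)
My plan is to compute the Laplace transform of $X$ in closed form, invert via the Bromwich formula, deform the vertical contour to a V-shape with vertex at $s=1/x$, and combine the two legs using Schwarz reflection. By independence of the $Z_k$ and the identity $E[e^{-s\sigma^2(Z+\zeta)^2}]=(1+2s\sigma^2)^{-1/2}\exp(-s\sigma^2\zeta^2/(1+2s\sigma^2))$ for a standard normal $Z$ (obtained by completing the square), I get
\begin{equation*}
E[e^{-sX}] \;=\; \prod_{k=1}^{\ell}(1+2s\sigma_k^2)^{-1/2}\exp\!\Bigl(-\frac{s\sigma_k^2\zeta_k^2}{1+2s\sigma_k^2}\Bigr).
\end{equation*}
The substitution $w_k=1+2s\sigma_k^2$ converts the exponent into $\zeta_k^2(1-w_k)/(2w_k)$, matching~(\ref{specfun}), and the Bromwich formula gives $F(x)=(2\pi i)^{-1}\int_{c-i\infty}^{c+i\infty} E[e^{-sX}]\,e^{sx}/s\,ds$ for any $c>0$; the only singularities of the integrand are a simple pole at $s=0$ and branch cuts emanating from each $s=-1/(2\sigma_k^2)\le 0$.

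Next I would deform the vertical Bromwich line to the V-shaped contour $V$ with vertex at $s=1/x>0$, consisting of the upper leg $s(y):=(1-y+iy\sqrt{\ell})/x$ for $y\in[0,\infty)$ and its complex-conjugate lower leg. Both legs lie in the slit domain $\mathbb{C}\setminus(-\infty,0]$ where the integrand is analytic, so the deformation is valid provided the connecting arcs at infinity do not contribute. Those arcs, lying in the sector $\pi/2\le\arg s\le\pi-\arctan\sqrt{\ell}$ in the upper half-plane (and its conjugate below), have $\mathrm{Re}(s)\le 0$, so $|e^{sx}|\le 1$; combined with $|E[e^{-sX}]|=O(|s|^{-\ell/2})$ and $|1/s|=R^{-1}$, the arc integrals of radius $R$ are $O(R^{-\ell/2})$ and vanish as $R\to\infty$ for $\ell\ge 1$.

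On the upper leg, direct algebra yields $e^{s(y)x}=e^{1-y}e^{iy\sqrt{\ell}}$, $1+2s(y)\sigma_k^2=w_k(y)$, and $(1/s(y))(ds/dy)=1/(y-1/(1-i\sqrt{\ell}))$, so the upper-leg contribution to the Bromwich integral equals $(2\pi i)^{-1}\int_0^{\infty} g(y)\,dy$, where $g(y)$ denotes the expression inside $\im(\cdot)$ in~(\ref{contoured}). By Schwarz reflection---the principal branch of each $\sqrt{w_k}$ satisfies $\sqrt{\overline{w_k}}=\overline{\sqrt{w_k}}$ because $\im w_k(y)=2y\sigma_k^2\sqrt{\ell}/x\ne 0$ for $y\ne 0$ keeps $w_k$ off the branch cut $(-\infty,0]$---the lower-leg contribution equals $-(2\pi i)^{-1}\int_0^{\infty}\overline{g(y)}\,dy$, the minus sign arising from its reversed orientation relative to the Bromwich direction. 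Adding,
\begin{equation*}
F(x) \;=\; \frac{1}{2\pi i}\int_0^{\infty}\bigl(g(y)-\overline{g(y)}\bigr)\,dy \;=\; \frac{1}{\pi}\int_0^{\infty}\im g(y)\,dy,
\end{equation*}
as claimed.

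The main technical obstacle will be rigorously justifying the contour deformation in the second step: one must verify that the family of intermediate contours interpolating the vertical Bromwich line and $V$ stays in the slit domain and that the connecting arcs of radius $R$ contribute $o(1)$ as $R\to\infty$. The algebraic decay $|E[e^{-sX}]|=O(|s|^{-\ell/2})$ only barely beats the linear growth in arc length, so the argument is borderline for $\ell=1$ and relies crucially on $|e^{sx}|\le 1$ along $V$ and on the connecting arcs lying wholly in the left half-plane.
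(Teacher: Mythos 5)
Your derivation is correct, and it is essentially the argument that underlies the result: the paper itself offers no proof of this theorem, saying only that it restates formula~7 of Rice in the form of formula~8 of Perkins--Tygert--Ward, and the route to Rice's formula is exactly the one you take --- compute $E[e^{-sX}]$ in closed form by completing the square, invert via $F(x) = (2\pi i)^{-1}\int E[e^{-sX}]\,e^{sx}\,s^{-1}\,ds$, and tilt the Bromwich line onto the V-shaped contour through $s=1/x$ with legs in the directions $(-1\pm i\sqrt{\ell})/x$, which is precisely what produces the factors $e^{1-y}e^{iy\sqrt{\ell}}$, $w_k(y)=1+2s(y)\sigma_k^2$, and $s'(y)/s(y)=(y-\frac{1}{1-i\sqrt{\ell}})^{-1}$, after which the lower leg folds onto the upper one by conjugation. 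Your justification of the deformation is sound: the pole at $s=0$ and the branch points at $-1/(2\sigma_k^2)$ all lie inside the leftward-opening wedge and are never crossed, and the connecting arcs sit in $\{\mathrm{Re}\,s\le 0\}$ so that $|e^{sx}|\le 1$ and the arc contributions are $O(R^{-\ell/2})$. Two small points to tidy up. First, a normalization slip: with $g$ defined as the expression inside $\im(\cdot)$ in~(\ref{contoured}) --- which already carries the $\pi$ in its denominator --- the upper-leg contribution is $(2i)^{-1}\int_0^\infty g\,dy$ rather than $(2\pi i)^{-1}\int_0^\infty g\,dy$, and the two legs sum to $\int_0^\infty \im g\,dy$, which is exactly~(\ref{contoured}); your final display retains a spurious extra factor of $1/\pi$. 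Second, the decay estimate $|E[e^{-sX}]|=O(|s|^{-\ell/2})$ presumes every $\sigma_k\ne 0$; the theorem as stated allows $\sigma_k=0$, in which case the exponent drops to the number of nonzero $\sigma_k$ and the fully degenerate case needs either a Jordan-lemma bound on the arcs or separate (trivial) treatment. Neither point affects the substance of the proof.
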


\begin{remark}
\label{quadratures}
An efficient means of evaluating~(\ref{contoured}) numerically
is to employ adaptive Gaussian quadratures;
see, for example, Section~4.7 of~\cite{press-teukolsky-vetterling-flannery}.
Good choices for the lowest orders of the quadratures used in the adaptive
Gaussian quadratures are 10 and 21, for double-precision accuracy.
\end{remark}

The remainder of the present section (particularly Remark~\ref{stabrem})
discusses the numerical stability of the method of Remark~\ref{quadratures}
and recalls an alternative integral representation suitable for use
when the method of Remark~\ref{quadratures} is not guaranteed
to be numerically stable.
The following lemma, proven in Remark~3.2 of~\cite{perkins-tygert-ward},
ensures that the denominator in~(\ref{contoured}) is not too small.

\begin{lemma}
Suppose that $\ell$ is a positive integer,
and $r_1$,~$r_2$, \dots, $r_{\ell}$ and $y$ are positive real numbers.
Suppose further that (in parallel with formula~(\ref{specfun}) above)
\begin{equation}
\label{newspecfun}
w_k = 1-r_k(y-1)+r_kiy\sqrt{\ell}
\end{equation}
for $k = 1$,~$2$, \dots, $\ell$.

Then,
\begin{equation}
\label{denom}
\left|\prod_{k=1}^{\ell} \sqrt{w_k}\right| > e^{-1/4}.
\end{equation}
\end{lemma}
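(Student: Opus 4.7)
The plan is to reformulate~(\ref{denom}) as $\sum_{k=1}^{\ell} \log |w_k|^2 > -1$, which is equivalent because $\bigl|\prod_k \sqrt{w_k}\bigr| = \exp\bigl(\tfrac{1}{4} \sum_k \log |w_k|^2\bigr)$. From~(\ref{newspecfun}), taking real and imaginary parts gives
\begin{equation*}
|w_k|^2 = \bigl(1 - r_k(y-1)\bigr)^2 + r_k^2 \, y^2 \, \ell,
\end{equation*}
a quadratic in $r_k$ with positive leading coefficient.

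I would then split into two cases at $y = 1$. For $y \le 1$, the factor $1 - r_k(y-1)$ is at least $1$ (since $r_k > 0$ and $y-1 \le 0$), so $|w_k|^2 \ge 1$ for every $k$ and the product in~(\ref{denom}) already exceeds $e^{-1/4}$. For $y > 1$, the main calculation is to minimize the quadratic above over all real $r_k$; completing the square gives $|w_k|^2 \ge y^2 \ell / [(y-1)^2 + y^2 \ell]$, attained at $r_k^{\ast} = (y-1)/[(y-1)^2 + y^2 \ell]$. Taking logarithms, summing over $k$, and invoking $\log(1+x) < x$ for $x > 0$ yields
\begin{equation*}
\sum_{k=1}^{\ell} \log |w_k|^2 \;\ge\; -\ell \, \log\!\Bigl(1 + \tfrac{(y-1)^2}{y^2 \ell}\Bigr) \;>\; -\tfrac{(y-1)^2}{y^2},
\end{equation*}
and $(y-1)^2/y^2 = (1 - 1/y)^2 < 1$ for $y > 1$, so the right-hand side exceeds $-1$ as required.

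The one place where care is needed is the case split itself: the bound $-(y-1)^2/y^2 > -1$ coming from the quadratic minimum fails for $y \le 1/2$, so the uniform lower estimate on $|w_k|^2$ is not by itself sufficient, and the elementary observation that $|w_k|^2 \ge 1$ throughout $y \le 1$ is what rescues the small-$y$ regime. Beyond that, the argument is a short exercise in completing the square and applying the inequality $\log(1+x) < x$.
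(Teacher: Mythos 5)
Your proof is correct. The paper does not reproduce a proof of this lemma---it defers to Remark~3.2 of the cited Perkins--Tygert--Ward paper---so there is no internal argument to compare against; your derivation (minimizing the quadratic $(1-r_k(y-1))^2 + r_k^2 y^2 \ell$ over $r_k$, summing logarithms, and applying $\log(1+x)<x$, with the elementary observation $|w_k|^2\ge 1$ covering the regime $y\le 1$ where the quadratic-minimum bound alone would fail) is complete and correct, and this is essentially the natural argument for the statement.
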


The following lemma ensures that the numerator in~(\ref{contoured})
is not too large, provided that $e^{\zeta_k^2/2}$ is not large.

\begin{lemma}
Suppose that $r$, $y$, and $\ell$ are positive real numbers and
(in parallel with formulae~(\ref{specfun}) and~(\ref{newspecfun}) above)
\begin{equation}
\label{contour}
w = 1-r(y-1)+riy\sqrt{\ell}.
\end{equation}

Then,
\begin{equation}
\label{bound}
\left|\frac{1-w}{w}\right| \le \sqrt{1 + \frac{1}{\ell}}.
\end{equation}
\end{lemma}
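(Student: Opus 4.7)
The plan is to reduce the claimed bound to an explicit inequality between $|1-w|^2$ and $|w|^2$, exploit a cancellation coming from the imaginary parts (which agree up to sign between $1-w$ and $w$), and then use the constraint $r,y>0$ to close the argument.

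First I would write $w = \alpha + i\beta$ with $\alpha = 1 - r(y-1)$ and $\beta = ry\sqrt{\ell}$, so that $1-w = -(\alpha-1) - i\beta = r(y-1) - i\, ry\sqrt{\ell}$. This gives
\begin{equation*}
|1-w|^2 = u^2 + \beta^2, \qquad |w|^2 = (1-u)^2 + \beta^2,
\end{equation*}
where $u := r(y-1)$. Squaring~(\ref{bound}) and clearing denominators, the target inequality becomes
\begin{equation*}
\ell\,(u^2 + \beta^2) \le (\ell+1)\bigl((1-u)^2 + \beta^2\bigr),
\end{equation*}
which after expansion and cancellation simplifies to the purely real inequality
\begin{equation*}
-u^2 + 2(\ell+1)u - (\ell+1) \le \beta^2.
\end{equation*}

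The second step is to exploit the positivity of $r$ and $y$. The key observation is that $u = ry - r = \beta/\sqrt{\ell} - r$, so since $r>0$ we have the one-sided constraint $u < \beta/\sqrt{\ell}$, or equivalently $\beta > u\sqrt{\ell}$. In the case $u\le 0$ the left-hand side of the reduced inequality is bounded above by $-(\ell+1) < 0 \le \beta^2$, so the inequality is immediate. In the remaining case $u>0$, the constraint yields $\beta^2 > \ell u^2$, and it therefore suffices to verify
\begin{equation*}
-u^2 + 2(\ell+1)u - (\ell+1) \le \ell u^2,
\end{equation*}
which rearranges to $0 \le (\ell+1)(u-1)^2$, a trivial identity.

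The only nontrivial piece of the plan is noticing that the bound \emph{requires} the one-sided constraint $u < \beta/\sqrt{\ell}$ coming from $r>0$; without using the asymmetric role of $r$ and $y$, the ratio $(u^2+\beta^2)/((1-u)^2+\beta^2)$ is in fact unbounded (take $u=1$ and $\beta$ small). So the main thing to be careful about is to introduce and use the parametrization $u = \beta/\sqrt{\ell} - r$ with $r>0$ at the right moment, rather than trying to prove the inequality as a free inequality in $(u,\beta)$. Everything else is a routine algebraic simplification that I would carry out in one short display.
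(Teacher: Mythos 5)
Your proof is correct. With $u = r(y-1)$ and $\beta = ry\sqrt{\ell}$ one indeed has $|1-w|^2 = u^2+\beta^2$ and $|w|^2 = (1-u)^2+\beta^2$, the squared inequality reduces to $\beta^2 \ge -u^2 + 2(\ell+1)u - (\ell+1)$, and your two cases ($u\le 0$ immediate, since the left-hand side is then at most $-(\ell+1)<0$; $u>0$ via $\beta^2 > \ell u^2$ and $(\ell+1)(u-1)^2 \ge 0$) close the argument. The paper reaches the same bound by a different normalization: it substitutes $z = 1/y$ and $c = 1+1/r$ to obtain the identity
\begin{equation*}
\frac{1-w}{w} = -\frac{1-z-i\sqrt{\ell}}{1-cz-i\sqrt{\ell}},
\end{equation*}
so that numerator and denominator both have the \emph{constant} imaginary part $\sqrt{\ell}$ and the comparison is simply between $(1-z)^2+\ell$ and $(1-cz)^2+\ell$; the positivity of $r$ enters as $c\ge 1$, and the case split $z\ge 1$ versus $z\le 1$ (which is exactly your split $u\le 0$ versus $u>0$, since $u=r(y-1)$) gives the ratio bounds $1$ and $(1+\ell)/\ell$ respectively. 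The two arguments invoke $r>0$ at the analogous moment and split at the same point $y=1$, but the paper's substitution makes the origin of the constant $1+1/\ell$ more transparent (the denominator is bounded below by $\ell$ outright), whereas your version isolates the one-sided constraint $u < \beta/\sqrt{\ell}$ explicitly and correctly observes that without it the ratio would be unbounded near $u=1$, $\beta\to 0$ --- a point the paper leaves implicit. Both proofs are complete and of comparable length.
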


\begin{proof}
Defining
\begin{equation}
\label{defz}
z = \frac{1}{y}
\end{equation}
and
\begin{equation}
\label{defc}
c = 1 + \frac{1}{r},
\end{equation}
we obtain that
\begin{equation}
\label{identity}
\frac{1-w}{w} = -\frac{1-z-i\sqrt{\ell}}{1-cz-i\sqrt{\ell}}.
\end{equation}
It follows from~(\ref{identity}) that
\begin{equation}
\label{initbound}
\left|\frac{1-w}{w}\right|^2 = \frac{(1-z)^2 + \ell}{(1-cz)^2 + \ell}.
\end{equation}

It follows from~(\ref{defz}) that $z \ge 0$
and from~(\ref{defc}) that $c \ge 1$, and hence
\begin{equation}
\label{inequal}
cz-1 \ge z-1.
\end{equation}
If $z \ge 1$, then~(\ref{inequal}) yields that
\begin{equation}
(cz-1)^2 \ge (z-1)^2,
\end{equation}
which in turn yields that
\begin{equation}
\label{case1}
\frac{(1-z)^2 + \ell}{(1-cz)^2 + \ell}
\le \frac{(1-z)^2 + \ell}{(1-z)^2 + \ell} = 1.
\end{equation}

If $z \le 1$, then (recalling that $z \ge 0$, too)
\begin{equation}
\label{case2}
\frac{(1-z)^2 + \ell}{(1-cz)^2 + \ell} \le \frac{(1-z)^2 + \ell}{\ell}
\le \frac{1+\ell}{\ell}.
\end{equation}

We see from~(\ref{case1}) and~(\ref{case2}) that, in all cases,
\begin{equation}
\label{cases}
\frac{(1-z)^2 + \ell}{(1-cz)^2 + \ell} \le 1 + \frac{1}{\ell}.
\end{equation}
Combining~(\ref{initbound}) and~(\ref{cases}) yields~(\ref{bound}).
\end{proof}

\begin{remark}
\label{stabrem}
The bound~(\ref{denom}) shows that the integrand in~(\ref{contoured})
is not too large for any nonnegative $y$,
provided that the numerator of~(\ref{contoured}) is not too large.
An upper bound on the numerator follows immediately from~(\ref{bound}):
\begin{equation}
\label{stability}
\left| \prod_{k=1}^{\ell} e^{\zeta_k^2 (1-w_k(y))/(2w_k(y))} \right|
\le \prod_{k=1}^{\ell} e^{\zeta_k^2 \sqrt{1+1/\ell}/2}.
\end{equation}
For any particular application,
we can check that the right-hand side of~(\ref{stability})
is not too many orders of magnitude in size,
guaranteeing that applying quadratures to the integral in~(\ref{contoured})
cannot lead to catastrophic cancellation in floating-point arithmetic.
Naturally, it is also possible to check on the magnitude of the integrand
in~(\ref{contoured}) during its numerical evaluation,
indicating even better numerical stability than guaranteed
by our {\it a priori} estimates.
See Theorem~\ref{altthm} and Remark~\ref{auxiliary} below
for an alternative integral representation
suitable for use when the right-hand side of~(\ref{stability}) is large.
\end{remark}

\begin{remark}
The bound in~(\ref{stability}) is quite pessimistic. In fact,
the real part of $(1-w_k(y))/(2w_k(y))$ is often nonpositive,
so that
\begin{equation}
\left|e^{\zeta_k^2 (1-w_k(y))/(2w_k(y))}\right| \le 1.
\end{equation}
\end{remark}

If the right-hand side of~(\ref{stability}) is large,
then we can use the method of~\cite{imhof}, \cite{davies}, and others,
applying numerical quadratures to the integral in the following theorem.
Please note that the integrand in the following theorem decays reasonably fast
when the right-hand side of~(\ref{stability}) is large. 

\begin{theorem}
\label{altthm}
Suppose that $\ell$ is a positive integer,
$Z_1$,~$Z_2$, \dots, $Z_{\ell}$ are i.i.d.\ Gaussian random variables
of zero mean and unit variance,
and $\sigma_1$,~$\sigma_2$, \dots, $\sigma_{\ell}$
and $\zeta_1$,~$\zeta_2$, \dots, $\zeta_{\ell}$
are real numbers.
Suppose in addition that $X$ is the random variable
\begin{equation}
\label{summed2}
X = \sum_{k=1}^{\ell} \sigma_k^2 \, (Z_k+\zeta_k)^2.
\end{equation}

Then, the cdf $F$ of $X$ is
\begin{equation}
\label{nocontour}
F(x) = \frac{1}{2}
     - \int_0^\infty \im\left(
       \frac{e^{-iy} \prod_{k=1}^{\ell} e^{\zeta_k^2 (1-v_k(y))/(2v_k(y))}}
            {\pi y \prod_{k=1}^{\ell} \sqrt{v_k(y)}}
       \right) \, dy
\end{equation}
for any positive real number $x$,
where
\begin{equation}
v_k(y) = 1-2iy\sigma_k^2/x,
\end{equation}
and $F(x) = 0$ for any nonpositive real number $x$.
The square roots in~(\ref{nocontour}) denote the principal branch,
and $\,\im$ takes the imaginary part.
\end{theorem}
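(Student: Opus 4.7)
The plan is to derive (\ref{nocontour}) by applying the Gil--Pelaez inversion formula to the characteristic function of $X$; this is essentially the route taken by \cite{imhof}, \cite{davies}, and others cited in the excerpt. First, I would compute the characteristic function of each summand $\sigma_k^2(Z_k+\zeta_k)^2$ --- either by completing the square in a Gaussian integral, or by quoting the characteristic function of a noncentral chi-squared random variable with one degree of freedom and noncentrality parameter $\zeta_k^2$ --- obtaining
\begin{equation*}
E\bigl[e^{it\sigma_k^2(Z_k+\zeta_k)^2}\bigr] = \frac{1}{\sqrt{1-2it\sigma_k^2}} \exp\!\left(\frac{it\sigma_k^2\zeta_k^2}{1-2it\sigma_k^2}\right),
\end{equation*}
with the principal branch of the square root. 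Setting $u_k = 1-2it\sigma_k^2$ and using the algebraic identity $it\sigma_k^2/(1-2it\sigma_k^2) = (1-u_k)/(2u_k)$ recasts each factor into the shape appearing inside the product in (\ref{nocontour}).

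By independence of $Z_1,\ldots,Z_\ell$, the characteristic function of $X$ is the product $\phi(t) = \prod_{k=1}^{\ell} E[e^{it\sigma_k^2(Z_k+\zeta_k)^2}]$. I would then invoke the Gil--Pelaez inversion formula
\begin{equation*}
F(x) = \frac{1}{2} - \frac{1}{\pi}\int_0^\infty \im\!\left(\frac{e^{-itx}\phi(t)}{t}\right)dt,
\end{equation*}
valid for every $x>0$ since $X$ has an absolutely continuous distribution on $(0,\infty)$, and then change variables via $y = tx$. This substitution carries $u_k$ into $v_k(y) = 1-2iy\sigma_k^2/x$, cancels the stray factors of $x$, and produces (\ref{nocontour}) verbatim.

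The main obstacle is the standard one for inversion formulas applied to quadratic forms in Gaussians: one must justify both the cancellation of the $1/t$ singularity at the origin and the convergence of the tail. Near $t=0$, the equality $\phi(0)=1$ together with a first-order Taylor expansion shows that $\im[e^{-itx}\phi(t)/t]$ extends continuously to a bounded function on a right-neighborhood of zero. At infinity, $|\phi(t)|$ decays like $|t|^{-\ell/2}$ because $\prod_k|1-2it\sigma_k^2|^{-1/2}$ does, while each exponential factor is bounded by $1$ in modulus because an elementary computation gives that the real part of $(1-v_k(y))/(2v_k(y))$ equals $-2y^2\sigma_k^4/(x^2+4y^2\sigma_k^4)\le 0$; hence the integrand is absolutely integrable on $[0,\infty)$ for every $\ell\ge 1$, and the derivation above is rigorous.
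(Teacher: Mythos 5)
Your derivation is correct and is exactly the route the paper intends: the paper offers no proof of Theorem~\ref{altthm} itself but attributes the representation to \cite{imhof} and \cite{davies}, whose argument is precisely the Gil--Pelaez inversion of the product of noncentral $\chi^2_1$ characteristic functions followed by the substitution $y=tx$, as you carry out. The only caveat worth recording is the degenerate case $\sigma_1=\dots=\sigma_\ell=0$, where $X$ is not absolutely continuous and the integral converges only conditionally (reducing to the Dirichlet integral), though the formula still holds there.
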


\begin{remark}
\label{auxiliary}
The integrand in~(\ref{nocontour}) is not too large
(except for values of $y$ that are closer to 0
than are typical quadrature nodes),
since the real part of $(1-v_k(y))/(2v_k(y))$ is always nonpositive, so that
\begin{equation}
\left|e^{\zeta_k^2 (1-v_k(y))/(2v_k(y))}\right| \le 1.
\end{equation}
Moreover, the numerator in~(\ref{nocontour}) decays reasonably fast
(it is sub-Gaussian) when the right-hand side of~(\ref{stability}) is large. 
\end{remark}

\section{Numerical method}
\label{method}

Combining Sections~\ref{linalg} and~\ref{representation} yields
an efficient method for calculating the cdf $F$
of $n$ times the square of the Euclidean distance
between the model and empirical distributions,
in the limit that $n$ is large,
when the $n$ observed draws are taken i.i.d.\ from an alternative distribution
$p_a = p_0 + a/\sqrt{n}$
(as always, $p_0$ is the model --- a probability distribution over $m$ bins ---
and $a$ is a vector whose $m$ entries satisfy $\sum_{k=1}^m a_k = 0$).
Indeed, Theorem~\ref{ms} shows that the desired $F$ is the same as that
in~(\ref{contoured}) and~(\ref{nocontour}),
with the real numbers $\sigma_1$,~$\sigma_2$, \dots, $\sigma_{\ell}$
and $\zeta_1$,~$\zeta_2$, \dots, $\zeta_{\ell}$
calculated as detailed in Section~\ref{linalg} (identifying $\ell = m-1$).
Remark~\ref{quadratures} describes an efficient means
of evaluating $F(x)$ in~(\ref{contoured})
that is numerically stable when the right-hand side
of~(\ref{stability}) is not too many orders of magnitude in size.
When the right-hand side of~(\ref{stability}) is many orders of magnitude
in size, we can apply quadratures to the representation
of $F(x)$ in~(\ref{nocontour}) instead (see Remark~\ref{auxiliary}).

\section{Plotting the asymptotic statistical power}
\label{scheme}

Let us denote by $\pi$ the cdf of the P-values
for the Euclidean distance (or, equivalently, for any positive multiple
of the square of the Euclidean distance);
$\pi$ is also the statistical power function of the hypothesis test based
on the Euclidean distance (as a function of the significance level).
The method of Section~\ref{method} is sufficient for plotting $\pi$
in the limit that the number of draws is large.
Indeed, suppose that $X$ denotes $n$ times the square of the Euclidean distance
between the model and empirical distributions,
$F_0$ denotes the cdf for $X$
when taking $n$ draws i.i.d.\ from the model probability distribution $p_0$,
and $F_a$ denotes the cdf for $X$
when taking $n$ draws i.i.d.\ from $p_a = p_0 + a/\sqrt{n}$,
where $a$ is a vector whose $m$ entries satisfy $\sum_{k=1}^m a_k = 0$.
The P-value $P$ equals $1-F_0(X)$, in the limit that $n$ is large,
and then the cdf $\pi$ of the P-values for draws from $p_a$ is
\begin{multline}
\pi(1-F_0(x)) = \Prob\{P \le 1-F_0(x)\} = \Prob\{1-F_0(X) \le 1-F_0(x)\} \\
= \Prob\{X \ge x\} = 1-F_a(x)
\end{multline}
for any nonnegative real number $x$;
thus, the graph of all points $(\alpha,\pi(\alpha))$
with $\alpha$ ranging from $0$ to $1$
is the same as the graph of all points $(1-F_0(x),1-F_a(x))$
with $x$ ranging from $0$ to $\infty$,
in the limit that $n$ is large.
Section~\ref{method} describes how to evaluate $F_0(x)$ and $F_a(x)$
for any real number $x$, in the limit that the number $n$ of draws is large;
note that $F_0(x) = F_a(x)$ when the entries of $a$ are all zeros,
so the procedure of Section~\ref{method} can evaluate $F_0(x)$
as well as $F_a(x)$.
When the entries of $a$ are all zeros,
$\zeta_1 = \zeta_2 = \dots = \zeta_{\ell} = 0$
in the method of Section~\ref{method},
and then the right-hand side of~(\ref{stability}) is exactly 1.

\section{Numerical examples}
\label{examples}

This section illustrates the algorithms of the present paper
via several numerical examples.
As detailed in the subsections below,
we consider three examples for the model $p_0$
(as always, $p_0$ is a probability distribution over $m$ bins, that is,
a vector whose entries are all positive
and satisfy $\sum_{k=1}^m (p_0)_k = 1$),
taking $n$ i.i.d.\ draws from the alternative probability distribution
\begin{equation}
\label{alternative}
p_a = p_0 + a/\sqrt{n},
\end{equation}
where $a$ is a vector whose $m$ entries satisfy $\sum_{k=1}^m a_k = 0$
(the subsections below detail several examples for $a$).
Figure~\ref{fig} plots the cdf $\pi$
of the P-values for $n$ i.i.d.\ draws taken
from the alternative distribution $p_a$, when $n$ is large;
$\pi$ is also the statistical power function of the hypothesis test based
on the Euclidean distance (as a function of the significance level).
For each of the examples, Figure~\ref{fig} plots
the cdf $\pi$ both for $n =$ 1,000,000 draws
(computed via Monte-Carlo simulations) and in the limit that $n$ is large
(computed via the algorithms of the present paper); not surprisingly,
there is little difference between the plots
for $n =$ 1,000,000 and for the limit that $n$ is large.
The lines in Figure~\ref{fig} corresponding to $n =$ 1,000,000 draws
are colored green; the lines corresponding to the limit of large $n$ are black.

\begin{remark}
For each example, we computed the cdf $\pi$
for $n =$ 1,000,000 draws via 40,000 Monte-Carlo simulations.
A straightforward argument based on the binomial distribution,
detailed in Remark~3.4 of~\cite{perkins-tygert-ward3},
shows that the standard errors of the resulting estimates of the P-values $P$
are equal to $\sqrt{P(1-P)/40000} \le 0.0025$,
ensuring that the standard errors of the plotted abscissae $\alpha$
for the green points in Figure~\ref{fig} are approximately
$\sqrt{\alpha(1-\alpha)/40000} \le 0.0025$
(roughly the size of the radii of the plotted points).
\end{remark}

\begin{remark}
For each example, we plotted the cdf $\pi$
in the limit of a large number $n$ of draws
via the scheme of Section~\ref{scheme}.
Figure~\ref{fig} displays the points
$(\alpha,\pi(\alpha)) = (1-F_0(x),1-F_a(x))$
for the 10000 values $x = 1/2000$,~$2/2000$, \dots, $10000/2000$,
in the limit that the number $n$ of draws is large,
where $F_0(x)$ and $F_a(x)$ are defined in Section~\ref{scheme} and
computed to at least 6-digit accuracy via the method of Section~\ref{method}.
\end{remark}

Table~\ref{costs} summarizes computational costs
of the procedure described in Section~\ref{method}.
The headings of Table~\ref{costs} have the following meanings:
\begin{itemize}
\item $m$ is the number of bins in the probability distributions
      $p_0$ and $p_a$.
\item $q_0$ is the maximum number of quadrature nodes required
      in any of the 10000 evaluations of $F_0$ plotted in Figure~\ref{fig}
      (Section~\ref{scheme} defines $F_0$),
      using adaptive Gaussian quadratures as described
      in Remark~\ref{quadratures}.
\item $q_a$ is the maximum number of quadrature nodes required
      in any of the 10000 evaluations of $F_a$ plotted in Figure~\ref{fig}
      (Section~\ref{scheme} defines $F_a$),
      using adaptive Gaussian quadratures as described
      in Remark~\ref{quadratures}.
\item $t$ is the time in seconds required to perform the quadratures
      for both $F_0(x)$ and $F_a(x)$ at a single value of $x$,
      amortized over the 10000 pairs $(1-F_0(x),1-F_a(x))$ plotted
      in Figure~\ref{fig} (Section~\ref{scheme} defines $F_0$ and $F_a$).
\end{itemize}

\subsection{Uniform model}
\label{uniform}

For our first example, we take
\begin{equation}
(p_0)_k = 1/10
\end{equation}
for $k = 1$,~$2$, \dots, $10$, and take
\begin{equation}
a_k = (-1)^k/5
\end{equation}
for $k = 1$,~$2$, \dots, $10$.
The Euclidean distance is equivalent to the canonical $\chi^2$ statistic
for this example, since $p_0$ is a uniform distribution.

\subsection{Nonuniform model}
\label{nonuniform}

For our second example, we take
\begin{equation}
(p_0)_k = \left\{ \begin{array}{rl}
                  1/2, & k = 1 \\
                  1/198, & k = 2, 3, \dots, 100
                  \end{array} \right.
\end{equation}
for $k = 1$,~$2$, \dots, $100$, and take
\begin{equation}
a_k = \left\{ \begin{array}{rl}
              2/3, & k = 1 \\
              -2/297, & k = 2, 3, \dots, 100
              \end{array} \right.
\end{equation}
for $k = 1$,~$2$, \dots, $100$.

\subsection{Poisson model}
\label{Poisson}

For our third example, we take
\begin{equation}
(p_0)_k = e^{-3} \, 3^{k-1}/(k-1)!
\end{equation}
for $k = 1$,~$2$,~$3$, \dots, and take
\begin{equation}
a_k = \left\{ \begin{array}{rl}
              (-1)^k/4, & k = 1, 2, 3, 4 \\
              (-1)^k/2, & k = 5, 6 \\
              0, & k = 7, 8, 9, \dots
              \end{array} \right.
\end{equation}
for $k = 1$,~$2$,~$3$, \dots.
For all numerical computations associated with this example,
we can truncate to the first 20 bins,
since $\sum_{k=21}^\infty (p_0)_k < 10^{-10}$.

\subsection{Poisson model with a different alternative}
\label{Poissona}

For our fourth example, we again take
\begin{equation}
(p_0)_k = e^{-3} \, 3^{k-1}/(k-1)!
\end{equation}
for $k = 1$,~$2$,~$3$, \dots, but now take
\begin{equation}
a_k = \left\{ \begin{array}{rl}
              1, & k = 1 \\
              -1/11, & k = 2, 3, \dots, 12 \\
              0, & k = 13, 14, 15, \dots
              \end{array} \right.
\end{equation}
for $k = 1$,~$2$,~$3$, \dots.
For all numerical computations associated with this example,
we can truncate to the first 20 bins,
since $\sum_{k=21}^\infty (p_0)_k < 10^{-10}$.

\begin{remark}
The right-hand side of~(\ref{stability}) is
8.233 for Subsection~\ref{uniform},
2.443 for Subsection~\ref{nonuniform}, and
24.05 for Subsection~\ref{Poisson}.
As discussed in Remark~\ref{stabrem},
roundoff errors in the numerical evaluation of~(\ref{contoured})
are therefore guaranteed to be negligible
for the standard floating-point arithmetic
(the mantissa in the standard, double-precision arithmetic has a dynamic range
of about $5 \cdot 10^{15} \gg 24.05$).
The right-hand side of~(\ref{stability}) is
$1.478 \cdot 10^{16}$ for Subsection~\ref{Poissona},
so we used~(\ref{nocontour}) rather than~(\ref{contoured})
for the last example (Remark~\ref{auxiliary} explains why).
\end{remark}

We used Fortran~77 and ran all examples on one core
of a 2.2~GHz Intel Core~2 Duo microprocessor with 2~MB of L2 cache.
Our code is compliant with the IEEE double-precision standard
(so that the mantissas of variables have approximately one bit of precision
less than 16 digits, yielding a relative precision
of about $2 \cdot 10^{-16}$).
We diagonalized the matrix $B$ defined in~(\ref{diagonalizer})
using the Jacobi algorithm
(see, for example, Chapter~8 of~\cite{golub-van_loan}),
not taking advantage of Remark~\ref{faster};
explicitly forming the entries of the matrix $B$
defined in~(\ref{diagonalizer}) can incur a numerical error
of at most the machine precision (about $2 \cdot 10^{-16}$)
times $\max_{1 \le k \le m} (p_0)_k / \min_{1 \le k \le m} (p_0)_k$,
yielding 6-digit accuracy or better for all our examples.
A future article will exploit the interlacing properties of eigenvalues,
following~\cite{gu-eisenstat94}, to obtain higher precision.
Of course, even 4-digit precision would suffice
for most statistical applications;
however, modern computers can produce high accuracy very fast,
as the examples in this section illustrate.

\begin{figure}
\begin{center}
\rotatebox{-90}{\scalebox{.6666666666}{\includegraphics{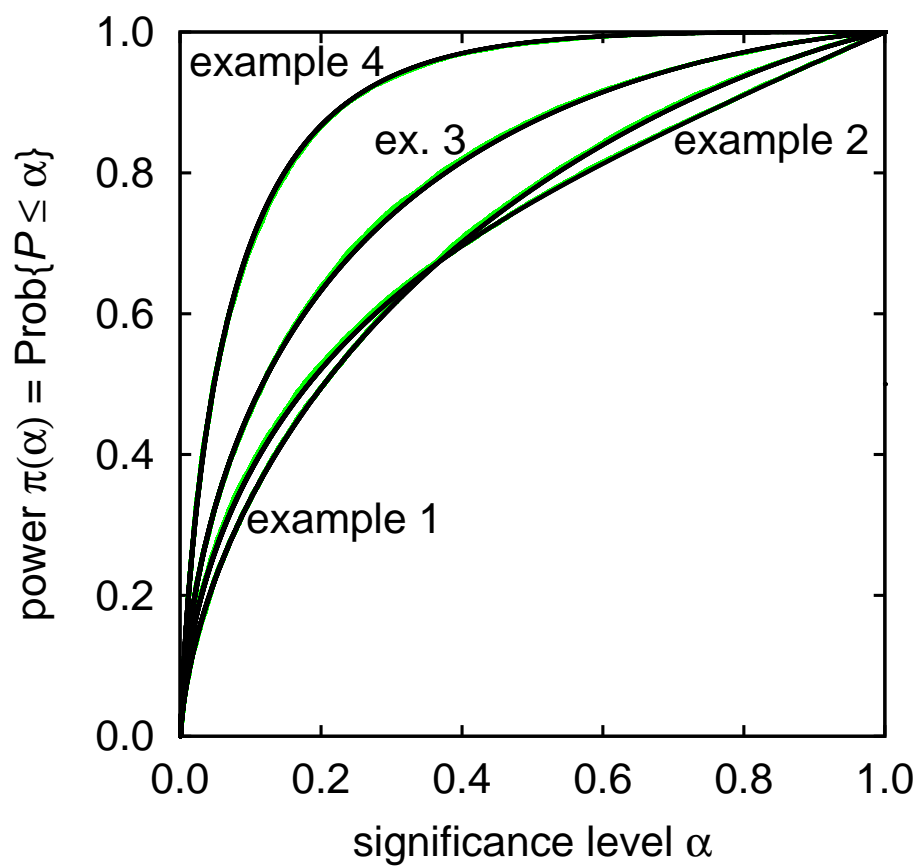}}}
\vspace{.1in}
\caption{Cumulative distribution functions of the P-values $P$
         for draws from the alternative distributions
         defined in Subsections~\ref{uniform}--\ref{Poissona}}
\label{fig}
\end{center}
\end{figure}

\begin{table}
\caption{Computational costs}
\label{costs}
\begin{center}
\begin{tabular}{c|cccc}
          & $m$ & $q_0$ & $q_a$ &   $t$ \\\hline
example 1 &  10 &   230 &   230 & 0.006 \\
example 2 & 100 &   530 &   550 & 0.090 \\
example 3 &  20 &   250 &   330 & 0.013 \\
example 4 &  20 &   350 &   350 & 0.010
\end{tabular}
\end{center}
\end{table}

\section*{Acknowledgements}

We would like to thank Jim Berger, Tony Cai, Jianqing Fan, Andrew Gelman,
Peter W. Jones, Ron Peled, Vladimir Rokhlin, and Rachel Ward
for many helpful discussions.

\newpage

\bibliographystyle{imsart-nameyear.bst}
\bibliography{stat}

\begin{thebibliography}{14}

\bibitem[\protect\citeauthoryear{Davies}{1980}]{davies}
\begin{barticle}[author]
\bauthor{\bsnm{Davies},~\bfnm{Robert~B.}\binits{R.~B.}}
(\byear{1980}).
\btitle{Algorithm {AS 155}: {T}he distribution of a linear combination of
  $\chi^2$ random variables}.
\bjournal{J. Roy. Statist. Soc. Ser. C}
\bvolume{29}
\bpages{323--333}.
\end{barticle}
\endbibitem

\bibitem[\protect\citeauthoryear{Duchesne and {de
  M}icheaux}{2010}]{duchesne-micheaux}
\begin{barticle}[author]
\bauthor{\bsnm{Duchesne},~\bfnm{Pierre}\binits{P.}} \AND \bauthor{\bsnm{{de
  M}icheaux},~\bfnm{Pierre~Lafaye}\binits{P.~L.}}
(\byear{2010}).
\btitle{Computing the distribution of quadratic forms: {F}urther comparisons
  between the {L}iu-{T}ang-{Z}hang approximation and exact methods}.
\bjournal{Comput. Statist. Data Anal.}
\bvolume{54}
\bpages{858--862}.
\end{barticle}
\endbibitem

\bibitem[\protect\citeauthoryear{Golub and {Van Loan}}{1996}]{golub-van_loan}
\begin{bbook}[author]
\bauthor{\bsnm{Golub},~\bfnm{Gene~H.}\binits{G.~H.}} \AND \bauthor{\bsnm{{Van
  Loan}},~\bfnm{Charles~F.}\binits{C.~F.}}
(\byear{1996}).
\btitle{Matrix Computations}, \bedition{3rd} ed.
\bpublisher{Johns Hopkins University Press}, \baddress{Baltimore, Maryland}.
\end{bbook}
\endbibitem

\bibitem[\protect\citeauthoryear{Gu and Eisenstat}{1994}]{gu-eisenstat94}
\begin{barticle}[author]
\bauthor{\bsnm{Gu},~\bfnm{Ming}\binits{M.}} \AND
  \bauthor{\bsnm{Eisenstat},~\bfnm{Stanley~C.}\binits{S.~C.}}
(\byear{1994}).
\btitle{A stable and efficient algorithm for the rank-one modification of the
  symmetric eigenproblem}.
\bjournal{SIAM J. Matrix Anal. Appl.}
\bvolume{15}
\bpages{1266--1276}.
\end{barticle}
\endbibitem

\bibitem[\protect\citeauthoryear{Gu and Eisenstat}{1995}]{gu-eisenstat95}
\begin{barticle}[author]
\bauthor{\bsnm{Gu},~\bfnm{Ming}\binits{M.}} \AND
  \bauthor{\bsnm{Eisenstat},~\bfnm{Stanley~C.}\binits{S.~C.}}
(\byear{1995}).
\btitle{A divide-and-conquer algorithm for the symmetric tridiagonal
  eigenproblem}.
\bjournal{SIAM J. Matrix Anal. Appl.}
\bvolume{16}
\bpages{172--191}.
\end{barticle}
\endbibitem

\bibitem[\protect\citeauthoryear{Imhof}{1961}]{imhof}
\begin{barticle}[author]
\bauthor{\bsnm{Imhof},~\bfnm{J.~P.}\binits{J.~P.}}
(\byear{1961}).
\btitle{Computing the distribution of quadratic forms in normal variables}.
\bjournal{Biometrika}
\bvolume{48}
\bpages{419--426}.
\end{barticle}
\endbibitem

\bibitem[\protect\citeauthoryear{Kendall
  et~al.}{2009}]{kendall-stuart-ord-arnold}
\begin{bbook}[author]
\bauthor{\bsnm{Kendall},~\bfnm{Maurice~G.}\binits{M.~G.}},
  \bauthor{\bsnm{Stuart},~\bfnm{Alan}\binits{A.}},
  \bauthor{\bsnm{Ord},~\bfnm{Keith}\binits{K.}} \AND
  \bauthor{\bsnm{Arnold},~\bfnm{Steven}\binits{S.}}
(\byear{2009}).
\btitle{Kendall's Advanced Theory of Statistics}
\bvolume{2A}, \bedition{6th} ed.
\bpublisher{Wiley}.
\end{bbook}
\endbibitem

\bibitem[\protect\citeauthoryear{Moore and Spruill}{1975}]{moore-spruill}
\begin{barticle}[author]
\bauthor{\bsnm{Moore},~\bfnm{David~S.}\binits{D.~S.}} \AND
  \bauthor{\bsnm{Spruill},~\bfnm{M.~C.}\binits{M.~C.}}
(\byear{1975}).
\btitle{Unified large-sample theory of general chi-squared statistics for tests
  of fit}.
\bjournal{Ann. Statist.}
\bvolume{3}
\bpages{599--616}.
\end{barticle}
\endbibitem

\bibitem[\protect\citeauthoryear{Perkins, Tygert, and
  Ward}{2011a}]{perkins-tygert-ward3}
\begin{btechreport}[author]
\bauthor{\bsnm{Perkins},~\bfnm{William}\binits{W.}},
  \bauthor{\bsnm{Tygert},~\bfnm{Mark}\binits{M.}} \AND
  \bauthor{\bsnm{Ward},~\bfnm{Rachel}\binits{R.}}
(\byear{2011}a).
\btitle{$\chi^2$ and classical exact tests often wildly misreport significance;
  the remedy lies in computers.}
\btype{Technical Report} No. \bnumber{1108.4126},
\binstitution{arXiv}.
\bnote{http://cims.nyu.edu/$\sim$tygert/abbreviated.pdf.}
\end{btechreport}
\endbibitem

\bibitem[\protect\citeauthoryear{Perkins, Tygert, and
  Ward}{2011b}]{perkins-tygert-ward}
\begin{barticle}[author]
\bauthor{\bsnm{Perkins},~\bfnm{William}\binits{W.}},
  \bauthor{\bsnm{Tygert},~\bfnm{Mark}\binits{M.}} \AND
  \bauthor{\bsnm{Ward},~\bfnm{Rachel}\binits{R.}}
(\byear{2011}b).
\btitle{Computing the confidence levels for a root-mean-square test of
  good\-ness-of-fit}.
\bjournal{Appl. Math. Comput.}
\bvolume{217}
\bpages{9072--9084}.
\end{barticle}
\endbibitem

\bibitem[\protect\citeauthoryear{Perkins, Tygert, and
  Ward}{2011c}]{perkins-tygert-ward2}
\begin{btechreport}[author]
\bauthor{\bsnm{Perkins},~\bfnm{William}\binits{W.}},
  \bauthor{\bsnm{Tygert},~\bfnm{Mark}\binits{M.}} \AND
  \bauthor{\bsnm{Ward},~\bfnm{Rachel}\binits{R.}}
(\byear{2011}c).
\btitle{Computing the confidence levels for a root-mean-square test of
  goodness-of-fit, {II}}
\btype{Technical Report} No. \bnumber{1009.2260},
\binstitution{arXiv}.
\end{btechreport}
\endbibitem

\bibitem[\protect\citeauthoryear{Press
  et~al.}{2007}]{press-teukolsky-vetterling-flannery}
\begin{bbook}[author]
\bauthor{\bsnm{Press},~\bfnm{William}\binits{W.}},
  \bauthor{\bsnm{Teukolsky},~\bfnm{Saul}\binits{S.}},
  \bauthor{\bsnm{Vetterling},~\bfnm{William}\binits{W.}} \AND
  \bauthor{\bsnm{Flannery},~\bfnm{Brian}\binits{B.}}
(\byear{2007}).
\btitle{Numerical Recipes}, \bedition{3rd} ed.
\bpublisher{Cambridge University Press}, \baddress{Cambridge, UK}.
\end{bbook}
\endbibitem

\bibitem[\protect\citeauthoryear{Rao}{2002}]{rao}
\begin{bincollection}[author]
\bauthor{\bsnm{Rao},~\bfnm{Calyampudi~R.}\binits{C.~R.}}
(\byear{2002}).
\btitle{{K}arl {P}earson chi-square test: {T}he dawn of statistical inference}.
In \bbooktitle{Goodness-of-Fit Tests and Model Validity}
(\beditor{\bfnm{C.}\binits{C.}~\bsnm{Huber-Carol}},
  \beditor{\bfnm{N.}\binits{N.}~\bsnm{Balakrishnan}},
  \beditor{\bfnm{M.~S.}\binits{M.~S.}~\bsnm{Nikulin}} \AND
  \beditor{\bfnm{M.}\binits{M.}~\bsnm{Mesbah}}, eds.)
\bpages{9--24}.
\bpublisher{Birkh\"auser}, \baddress{Boston}.
\end{bincollection}
\endbibitem

\bibitem[\protect\citeauthoryear{Rice}{1980}]{rice}
\begin{barticle}[author]
\bauthor{\bsnm{Rice},~\bfnm{Stephen~O.}\binits{S.~O.}}
(\byear{1980}).
\btitle{Distribution of quadratic forms in normal random variables ---
  {E}valuation by numerical integration}.
\bjournal{SIAM J. Sci. Stat. Comput.}
\bvolume{1}
\bpages{438--448}.
\end{barticle}
\endbibitem

\end{thebibliography}

\end{document}